\ttfamily\color{gray},
\ttfamily\color{gray}\footnotesize,
\def\bx{\boldsymbol{x}}
\def\bU{\boldsymbol{U}}
\def\bA{\boldsymbol{A}}
\def\bL{\boldsymbol{L}}
\def\bP{\boldsymbol{P}}
\def\bQ{\boldsymbol{Q}}
\def\bR{\boldsymbol{R}}
\def\bS{\boldsymbol{S}}
\def\bmu{\boldsymbol{\mu}}
\def\bSigma{\boldsymbol{\Sigma}}
\def\bLambda{\boldsymbol{\Lambda}}
\def\bPsi{\boldsymbol{\Psi}}
\newcommand{\real}{I\hspace{-1.0mm}R}
\newcommand{\diag}{\mathop{\mathrm{diag}}}
\def\bSig\mathbf{\Sigma}
\newtheorem{definition}{Definition}[section]
\newtheorem{theorem}{Theorem}[section]
\begin{document}

\title{Unconstrained representation of orthogonal matrices with application to common principle components}

\author{
Luca Bagnato\thanks{
Universit\`{a} Cattolica del Sacro Cuore, 
Dipartimento di Scienze Economiche e Sociali, 
Via Emilia Parmense, 29122 Piacenza, Italy, e.mail: \texttt{luca.bagnato@unicatt.it}}
\quad  
Antonio Punzo\thanks{
Universit\`{a} di Catania,
Dipartimento di Economia e Impresa,
Corso Italia 55, 95129, Catania, Italy, e.mail: \texttt{antonio.punzo@unict.it}}
}

\date{}

\maketitle

\begin{abstract}
Many statistical problems involve the estimation of a $\left(d\times d\right)$ orthogonal matrix $\bQ$.
Such an estimation is often challenging due to the orthonormality constraints on $\bQ$.
To cope with this problem, we propose a very simple decomposition for orthogonal matrices which we abbreviate as PLR decomposition.
It produces a one-to-one correspondence between $\bQ$ and a $\left(d\times d\right)$ unit lower triangular matrix $\bL$ whose $d\left(d-1\right)/2$ entries below the diagonal are unconstrained real values.
Once the decomposition is applied, regardless of the objective function under consideration, we can use any classical unconstrained optimization method to find the minimum (or maximum) of the objective function with respect to $\bL$.
For illustrative purposes, we apply the PLR decomposition in common principle components analysis (CPCA) for the maximum likelihood estimation of the common orthogonal matrix when a multivariate leptokurtic-normal distribution is assumed in each group.
Compared to the commonly used normal distribution, the leptokurtic-normal has an additional parameter governing the excess kurtosis; this makes the estimation of $\bQ$ in CPCA more robust against mild outliers.
The usefulness of the PLR decomposition in leptokurtic-normal CPCA is illustrated by two biometric data analyses.

\end{abstract}

\textbf{Key words}: Orthogonal matrix, LU decomposition, QR decomposition, Common principal components, FG algorithm, Leptokurtic-normal distribution.

\section{Introduction}
\label{sec:intro}

With the term orthogonal matrix we refer to a $\left(d \times d\right)$ matrix $\bQ$ whose columns are mutually orthogonal unit vectors (i.e., orthonormal vectors).
As highlighted by \citet[][p.~209]{Bane:Line:2014}, one might, perhaps more properly, call $\bQ$ an ``orthonormal'' matrix, but the more conventional name is an ``orthogonal'' matrix, and we will adopt it hereafter.
For further characterizations, properties, and details about orthogonal matrices see, e.g., \citet[][Chapter~9.10]{lutkepohl96}, \citet[][Chapter~3.5]{Heal:Matr:2000}, \citet[][Chapter~1.10]{Scho:Matr:2016}, and \citet[][Chapter~5.4]{Sear:Khur:Matr:2017}.  
Orthogonal matrices are used extensively in statistics, especially in linear models and multivariate analysis (see, e.g., \citealp[][Chapter~11]{Gray:AnIn:1976} and \citealp{Jame:Norm:1954}).

The $d^2$ elements of $\bQ$ are subject to $d\left(d+1\right)/2$ (orthonormality) constraints.
It is therefore not surprising that they can be represented by only $d^2-d\left(d+1\right)/2=d\left(d-1\right)/2$ independent parameters.
A representation is convenient if $\bQ$ can be quickly computed from these $d\left(d-1\right)/2$ independent parameters.
Such a representation should facilitate the search for an orthogonal matrix that satisfies a certain optimality criterion induced by the chosen estimation method, especially if these independent parameters were real-valued \citep{Khur:Adva:2003}.
Methods to parameterize an orthogonal matrix are reviewed in \citet{Khur:Andr:TheP:1989}.
A similar problem is bumped into when a $\left(d \times d\right)$ positive-definite matrix $\bSigma$, often encountered in statistics in the form of a covariance matrix, needs to be estimated.
Luckily, in this case, the Cholesky decomposition allows to map the $d\left(d+1\right)/2$ independent parameters of $\bSigma$ with the $d\left(d+1\right)/2$ real-valued elements of a $\left(d \times d\right)$ lower triangular matrix (\citealp{pourahmadi99,pourahmadi00} and \citealp{pourahmadi07}).

Unfortunately, an analogous of the Cholesky decomposition does not exist for $\bQ$.
In this paper, we fill the gap by providing the PLR decomposition: similarly to the Cholesky decomposition, it maps $\bQ$ to a $\left(d \times d\right)$ unit lower triangular matrix $\bL$ whose $d\left(d-1\right)/2$ entries below the diagonal are real-valued. 
The PLR decomposition is based on the famous QR and PLU factorizations which are used, in our context, for squared and invertible matrices.

For illustrative purposes, we apply the PLR decomposition in common principal component analysis (CPCA), where the space spanned by the $d$ vectors (principal components) of $\bQ$ is assumed to be identical across several known groups, whereas the variances associated with the common principal components may vary.
When the groups are assumed to be normally distributed, as typically happens, we can use the FG algorithm developed by \citet{Flur:Gaut:anal:1986} for the estimation of $\bQ$.
Although the FG algorithm is distribution-free \citep[][p.~71]{Flur:comm:1988}, under non-normal distributions it may provide an orthogonal matrix which does not maximize the likelihood.
Motivated by this consideration, we assume groups having a leptokurtic-normal distribution and use the PLR decomposition to allow $\bQ$ to be estimated by any standard unconstrained maximization routine.
The leptokurtic-normal is an heavy-tailed generalization of the normal distribution that can be preferred, for example, in the presence of mild outliers.

The paper is organized as follow.
In Section~\ref{sec:Decomposition of orthogonal matrices} we introduce the PLR decomposition.
In Section~\ref{sec:Application} we first define the CPCA based on leptokurtic-normal groups, and then consider the PLR decomposition in the estimation of the common orthogonal matrix.
In Section~\ref{sec:Application to allometric studies} we illustrate the leptokurtic-normal CPCA in allometric studies by using two well-known biometric data sets, where the method shows its better performance with respect to the consolidated normal CPCA.
Nevertheless, we want to stress that our goal is not to propose a new robust method for estimating the common principal components.
Instead, we simply want to illustrate how, regardless of the way the orthogonal matrix enters in the considered model, we can use our PLR decomposition, along with any unconstrained optimization routine, to estimate $\bQ$, without the need to define \textit{ad-hoc} estimating algorithms.
Finally, we give conclusions and avenues for further research in Section~\ref{sec:Conclusions}.
 
\section{PLR decomposition of orthogonal matrices}
\label{sec:Decomposition of orthogonal matrices}

Before to present the PLR decomposition for orthogonal matrices, Definitions~\ref{theo:2pre} and \ref{theo:1pre} recall the well-known QR and PLU decompositions.
\begin{definition}[QR decomposition]\label{theo:2pre} 
If $\bA$ is a $\left(d\times d\right)$ invertible matrix, then there is a unique $\left(d\times d\right)$ orthogonal matrix $\bQ$, and a unique $\left(d\times d\right)$ upper triangular matrix $\bR$, such that
$\bA = \bQ\bR$.
\end{definition}
\begin{definition}[PLU decomposition]\label{theo:1pre} 
If $\bA$ is a $(d\times d)$ invertible matrix, then there is a $(d\times d)$ permutation matrix $\bP$, a $\left(d\times d\right)$ unit lower triangular matrix $\bL$, and a $\left(d\times d\right)$ upper triangular matrix $\bU$, such that
$\bA = \bP\bL\bU$.
\end{definition}
The following theorem presents our PLR decomposition.
\begin{theorem}[PLR decomposition]\label{theo:1dec}
If $\bQ$ is a $\left(d\times d\right)$ orthogonal matrix, then it can be factorized as
\begin{equation}
\bQ =  \bP \bL \bR^{-1},
\label{eq:decPLR}
\end{equation}
where $\bP$ and $\bL$ are defined as in Definition~\ref{theo:1pre}, and $\bR$ is the upper triangular matrix coming from the QR decomposition of $\bA=\bP\bL$ (see Definition~\ref{theo:2pre}).
\end{theorem}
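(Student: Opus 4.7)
The plan is to apply the PLU decomposition of Definition~\ref{theo:1pre} to $\bQ$ itself and then reinterpret the resulting identity as a QR factorization of $\bA=\bP\bL$. Since $\bQ$ is orthogonal it is invertible, so Definition~\ref{theo:1pre} yields a permutation matrix $\bP$, a unit lower triangular matrix $\bL$, and an upper triangular matrix $\bU$ such that $\bQ=\bP\bL\bU$. I would then rearrange this identity as $\bP\bL=\bQ\bU^{-1}$.

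The next step will be to note that the inverse of an invertible upper triangular matrix is again upper triangular, so $\bU^{-1}$ is upper triangular. Consequently, the right-hand side $\bQ\bU^{-1}$ displays $\bA=\bP\bL$ as an orthogonal matrix times an upper triangular matrix, which is precisely the form of a QR factorization in the sense of Definition~\ref{theo:2pre}. Setting $\bR:=\bU^{-1}$ and right-multiplying $\bP\bL=\bQ\bR$ by $\bR^{-1}$ delivers $\bQ=\bP\bL\bR^{-1}$, which is~(\ref{eq:decPLR}); the unit lower triangular structure of $\bL$ and the permutation structure of $\bP$ required in~(\ref{eq:decPLR}) are inherited directly from the PLU step, with no further adjustment needed.

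The main point to handle carefully is the identification of $\bR=\bU^{-1}$ with ``the'' upper triangular factor asserted in Definition~\ref{theo:2pre}, because strict uniqueness of the QR factorization is usually stated under a sign convention on the diagonal of $\bR$ (for instance, positive pivots), whereas the $\bU^{-1}$ produced from an arbitrary PLU step need not satisfy such a convention. I would accordingly read Definition~\ref{theo:2pre} as furnishing \emph{an} orthogonal$\times$upper triangular factorization of $\bA=\bP\bL$ and then observe that $\bA=\bQ\bU^{-1}$ is one such factorization whose orthogonal factor is exactly the target $\bQ$. Once this reading is agreed upon, the whole argument collapses to the short PLU-to-QR manipulation above, and no other step poses real difficulty.
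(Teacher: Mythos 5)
Your argument is correct and follows essentially the same route as the paper's proof: apply the PLU decomposition to $\bQ$, rewrite it as $\bP\bL=\bQ\bU^{-1}$, and recognize this as a QR factorization of $\bA=\bP\bL$, so that $\bR=\bU^{-1}$ and hence $\bQ=\bP\bL\bR^{-1}$. Your closing caveat about the sign convention on the diagonal of $\bR$ touches a genuine subtlety that the paper passes over when it invokes uniqueness of the QR factorization and asserts that ``it is easy to verify that $\bU$ must be equal to $\bR^{-1}$,'' so flagging it is a strength of your write-up rather than a gap.
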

\begin{proof}
Because $\bQ$ is an orthogonal matrix, it is invertible and, according to Definition~\ref{theo:1pre}, admits the PLU decomposition
\begin{equation}
\bQ = \bP \bL \bU.
\label{eq:decPLU}
\end{equation}
Because any unit lower triangular matrix is invertible, $\bL$, as well as $\bA=\bP\bL$, are invertible. 
Then, according to Definition~\ref{theo:2pre}, $\bA$ admits the QR decomposition $\bA=\bQ\bR$, which we recall to be unique.
Then, it is easy to verify that $\bU$ must be equal to $\bR^{-1}$, and the theorem is proved.
\end{proof}

$\bL$ is the key matrix of the PLR decomposition in \eqref{eq:decPLR}; indeed, $\bP$ can be thought to as a sort of nuisance matrix only affecting the ordering of the columns of $\bQ$ -- and we know that such an ordering is often not of interest -- and $\bR$ is a function of $\bL$.
In particular, $\bR$ is the upper triangular matrix coming from the QR decomposition of $\bA=\bP\bL$.
Note that, the number of free elements in $\bL$ (which are those below the main diagonal) is $d\left(d-1\right)/2$, as the number of free elements of the orthogonal matrix $\bQ$.
This means that: 1) any orthogonal matrix admits the PLR decomposition in \eqref{eq:decPLR}, and 2) any pair $\left\{\bP,\bL\right\}$ is associated to an orthogonal matrix.

\section{Leptokurtic-normal common principal components}
\label{sec:Application}

The advantages of our PLR decomposition can be appreciated in many statistical fields. 
Among them there is the common principal component analysis. 
Below, we give an example by considering groups being distributed according to a multivariate leptokurtic-normal.

\subsection{Preliminaries}
\label{subsec:Preliminaries}

Let $\left\{\bx_{ij};\text{ $i=1,\ldots,n_j$, $j=1,\ldots,k$}\right\}$, with $\bx_{ij}\in\real^d$, be a set of $n=\sum_{j=1}^kn_j$ independent observations from $k$ independent groups (or subpopulations) having mean $\bmu_j$ and covariance matrix $\bSigma_j$.
If the inferential interest is on $\bSigma_1,\ldots,\bSigma_k$, then there is the need to estimate $kd\left(d+1\right)/2$ parameters.
Such a number may be excessive when $k$, but especially $d$, are large, causing problems in the estimation phase.
These problems can often be avoided if $\bSigma_1,\ldots,\bSigma_k$ exhibit some common structure, and several models have been proposed in this direction
(see, e.g., \citealp{Flur:comm:1984,Flur:Prop:1986,Flur:Two:1987}, \citealp{Boik:Spec:2002}, and \citealp{Gres:Ingr:Punz:Asse:2011}).
The assessment of a common covariance structure, in addition to allow for parsimony, can provide more information about the group conditional distributions \citep{Gres:Punz:Clos:2013} and it is of intrinsic interest in several fields such as biometry (refer to Section~\ref{sec:Application to allometric studies}). 

Most of the existing common covariance structures are based on the eigen-decomposition $\bSigma_j=\bQ_j\bLambda_j\bQ_j'$, $j=1,\ldots,k$, where $\bLambda_j=\diag\left(\lambda_{j1},\ldots,\lambda_{jd}\right)$ and $\bQ_j$ denote the eigenvalues and eigenvectors matrices, respectively. 
A famous common structure assumes that the $k$ covariance matrices have possibly different eigenvalues but identical eigenvectors, i.e.,
\begin{equation}
\bSigma_j=\bQ\bLambda_j\bQ', \quad j=1,\ldots,k.
\label{eq:N-CPC model}
\end{equation}
Model~\eqref{eq:N-CPC model} was proposed by \citet{Flur:comm:1984} and is known as the common principal components (CPC) model. 
\citet{Flur:comm:1984} also derived the maximum likelihood (ML) estimators of $\bQ$ and $\bLambda_j$, assuming $d$-variate normality in each group.
The asymptotic distribution of these estimators was studied by \citet{Flur:Asym:1986}.
The corresponding likelihood function can be written as
\begin{equation}
l_{\text{N-CPC}}\left(\bPsi\right) = C -\frac{1}{2} \sum_{j=1}^k n_j\left[ \sum_{h=1}^d \ln \lambda_{jh} + \mbox{tr}\left(\bQ \bLambda_j^{-1}\bQ' \bS_j\right)\right],
\label{eq:N-CPC loglik}
\end{equation}
where $\bPsi=\left\{\bmu_j,\bQ,\bLambda_j;\text{ $j=1,\ldots,k$}\right\}$ is the whole set of parameters of cardinality $m_{\text{N-CPC}}=kd + d\left(d-1\right)/2 + kd$, $C$ is a constant that does not depend on the parameters, and $\bS_j=n_j^{-1}\sum_{i=1}^{n_j}\bS_{ij}$, with $\bS_{ij}= \left(\bx_{ij}-\bmu_j\right)\left(\bx_{ij}-\bmu_j\right)'$.
A closed-form solution for the ML estimate of $\bQ$ does not exist, but the Flury-Gautschi (FG) algorithm of \citet{Flur:Gaut:anal:1986}, which is based on pairwise rotations of orthogonal vectors \citep{Flur:Cons:theF:1985}, can be used to obtain such a solution.
The more appropriate the CPC model is, the more able the ML-estimated common orthogonal matrix $\bQ$ is to simultaneously rotate the sample covariance matrices to nearly diagonal form.
\citet{Flur:comm:1984} also proposed a likelihood-ratio test having the CPC model under the null and the unconstrained model under the alternative.
The monograph by \citet{Flur:comm:1988} provides a rigorous treatise of the CPC and related models, detailing their properties and offering several examples, with a special focus on biometric data.

\subsection{The model}
\label{subsec:The model}

However, as confirmed by the simulations of \citet{Hall:Pain:Verd:Test:2010}, the CPC model discussed above is quite sensitive to the violation of group-specific multivariate normality (see, e.g., \citealp{Boen:Orel:Arob:2001} and \citealp{Boen:Pire:Rodr:Infl:2002,Boen:Pire:Rodr:Gene:2006,Boen:Pire:Rodr:Robu:2009} for examples of robust CPC approaches). 
These violations are often due to the presence of mild outliers. 
Contextualizing in CPCA the definition given by \citet[][p.~4 and pp.~79--80; see also \citealp{Mazz:Punz:Mixt:2017}]{Ritt:Robu:2015}, we can define as mild outlier in group $j$ a point that does not deviate from the reference multivariate normal distribution of that group and is not strongly outlying but, rather, it produces an overall group-specific distribution with heavier tails.
Therefore, to reduce the influence of these points, more-flexible elliptically symmetric heavy-tailed distributions can be considered \citep[][p.~4 and pp.~79]{Ritt:Robu:2015}.
Following this idea, we consider the multivariate leptokurtic-normal distribution \citep{Bagn:Punz:Zoia:Them:2016} in CPCA.
Compared to the normal distribution, the leptokurtic-normal has an additional parameter $\beta$ governing the excess kurtosis and, advantageously with respect to other heavy-tailed elliptical distributions, its parameters correspond to quantities of direct interest (mean, covariance matrix, and excess kurtosis).
Such a distribution was successfully applied in the modelling of biometric and financial data (\citealp{Bagn:Punz:Zoia:Them:2016} and \citealp{Maru:Punz:Bagn:Hidd:2019}). 

The probability density function (pdf) of a $d$-variate leptokurtic-normal (LN) distribution with mean vector $\bmu$, covariance matrix $\bSigma$, and excess kurtosis $\beta \in \left[0,\beta_{\max}\right]$, where $\beta_{\max}=\min\left[4d,4d\left(d+2\right)/5\right]$, is given by
\begin{equation}
f_{\text{LN}}\left(\bx;\bmu,\bSigma,\beta\right) = q\left[\left(\bx-\bmu\right)'\bSigma^{-1}\left(\bx-\bmu\right);\beta\right] f_{\text{N}}\left(\bx; \bmu,\bSigma\right),
\quad  \bx \in \real^d,
\label{eq:MLN}
\end{equation}
where $f_{\text{N}}\left(\cdot ; \bmu, \bSigma\right)$ is the pdf of a $d$-variate normal distribution with mean vector $\bmu$ and covariance matrix $\bSigma$, and 
	\begin{equation*}
q\left(y;\beta\right) = 1 + \frac{\beta}{8 d\left(d+2\right)}\left[y^2-2\left(d+2\right)y+d\left(d+2\right)\right], \quad y\geq 0.
\label{eq:moddenmm}
\end{equation*}
The constraint $\beta \in \left[0,\beta_{\max}\right]$ is the intersection of: i) $\beta \in \left[0,4d\right]$, which assures that the pdf is positive elliptical, and ii) $\beta \in \left[0,4d\left(d+2\right)/5\right]$, which guarantees that the pdf is unimodal.
As a special case, $f_{\text{LN}}\left(\bx;\bmu,\bSigma,\beta\right)$ coincides with $f_{\text{N}}\left(\bx;\bmu,\bSigma\right)$ for $\beta=0$. 

The log-likelihood function of the CPC model based on leptokurtic-normal groups can be written as
\small
\begin{align}
l_{\text{LN-CPC}}\left(\bPsi\right) = &
\ C -\frac{1}{2} \sum_{j=1}^k n_j\left[ \sum_{h=1}^d \ln \lambda_{jh} + \mbox{tr}\left(\bQ \bLambda_j^{-1}\bQ' \bS_j\right)\right] + \nonumber \\
&  
+ \sum_{j=1}^k\sum_{i=1}^{n_j} 
\ln\left\{1 +
\frac{\beta_j}{8d(d+2)}
\left[
\mbox{tr}\left(\bQ \bLambda_j^{-1}\bQ' \bS_{ij}\right)^2 -2(d+2)\mbox{tr}\left(\bQ \bLambda_j^{-1}\bQ'\bS_{ij}\right)+d(d+2)
\right]
\right\},
\label{eq:LN-CPC loglik} 
\end{align} \normalsize
\hspace{-0.46cm} where $\bPsi=\left\{\bmu_j,\bQ,\bLambda_j,\beta_j;\text{ $j=1,\ldots,k$}\right\}$ is the whole set of parameters of cardinality $m_{\text{LN-CPC}} = m_{\text{N-CPC}} + k$.

\subsection{Computational details}
\label{subsec:Computational details}

The maximization of $l_{\text{LN-CPC}}\left(\bPsi\right)$ with respect to $\bPsi$ does not admit a closed-form solution \citep[see][for the case $k=1$]{Bagn:Punz:Zoia:Them:2016}. 
Furthermore, the maximization problem is constrained due to $\bQ$, $\bLambda_j$, and $\beta_j$, $j=1,\ldots,k$.
Even if we marginalize the objective function with respect to $\beta_1,\ldots,\beta_k$, the FG-algorithm, which does not depend on the assumption of multivariate normality in the $k$ subpopulations \citep[][p.~71 and Section~9.3]{Flur:comm:1988}, can not be used to find the ML estimate of $\bQ$ and $\bLambda_1,\ldots,\bLambda_k$.
All these arguments give us the opportunity to appreciate the advantages of the proposed PLR decomposition.

To make the maximization of $l_{\text{LN-CPC}}\left(\bPsi\right)$ unconstrained, we follow a transformation/back-transformation approach from the original constrained parameters to unconstrained real values.
The constrained orthogonal matrix $\bQ$ is mapped to a $\left(d\times d\right)$ unit lower triangular matrix $\bL$, having $d\left(d-1\right)/2$ unconstrained real valued entries, via the PLR decomposition
\begin{equation}
\bQ = \bP \bL \bR^{-1},
\label{eq:Gamma transformation}
\end{equation}
where we recall that $\bP$ is a $\left(d\times d\right)$ permutation matrix and $\bR$ is a $\left(d\times d\right)$ upper triangular matrix depending on $\bL$ (see Definition~\ref{theo:1dec}).
The back-transformation is
\begin{equation}
 \bL = \bP' \bQ \bR,
\label{eq:Gamma back-transformation}
\end{equation}
which can be easily obtained by starting from the QR decomposition of $\bP \bL$.
The simple \textsf{R} code \citep{R:2018} to compute \eqref{eq:Gamma transformation} and \eqref{eq:Gamma back-transformation} is given in Appendix~\ref{app:PLR decomposition in R}.
As concerns the generic diagonal element $\lambda_{jh}$ of $\bLambda_j$, $j=1,\ldots,k$ and $h=1,\ldots,d$, the transformation is
\begin{equation}
\lambda_{jh}=\exp\left(\widetilde{\lambda}_{jh}\right),
\label{eq:lambda transformation}
\end{equation}
with $\widetilde{\lambda}_{jh} \in \real$, while the back-transformation is
\begin{equation}
\widetilde{\lambda}_{jh} = \log\left(\lambda_{jh}\right).
\label{eq:lambda back-transformation}
\end{equation}
Finally, regarding $\beta_j$, $j=1,\ldots,k$, the transformation is
\begin{equation}
\beta_j=\beta_{\max}\frac{\exp\left(\widetilde{\beta}_j\right)}{1+\exp\left(\widetilde{\beta}_j\right)},
\label{eq:beta transformation}
\end{equation}
with $\widetilde{\beta}_j \in \real$, while the back-transformation is
\begin{equation}
\widetilde{\beta}_j = \log\left(\frac{\beta_j}{\beta_{\max}-\beta_j}\right).
\label{eq:beta back-transformation}
\end{equation}
Based on \eqref{eq:Gamma transformation}, \eqref{eq:lambda transformation} and \eqref{eq:beta transformation}, in the transformation step of our procedure we maximize the log-likelihood function $l_{\text{LN-CPC}}$ with respect to $\bmu$ (which does not require any transformation), $\widetilde{\lambda}_{jh}$, $\widetilde{\beta}_j$, and to the elements below the diagonal of $\bL$, $j=1,\ldots,k$ and $h=1,\ldots,d$.
Operationally, we perform this unconstrained maximization via the general-purpose optimizer \texttt{optim()} for \textsf{R}, included in the \textbf{stats} package.
We try two different algorithms (Nelder-Mead and BFGS) for maximization.
They can be passed to \texttt{optim()} via the argument \texttt{method}.
In the back-transformation step of our procedure, the values of $\bL$, $\widetilde{\lambda}_{jh}$, and $\widetilde{\beta}_j$ maximizing $l_{\text{LN-CPC}}$ can be simply inserted in \eqref{eq:Gamma transformation}, \eqref{eq:lambda transformation} and \eqref{eq:beta transformation}, respectively, in order to obtain the ML estimates of $\bQ$, $\lambda_{jh}$, and $\beta_j$, $j=1,\ldots,k$ and $h=1,\ldots,d$.

Initial (real) values are required by \texttt{optim()} for maximization.
We use the group-specific sample mean vectors for $\bmu_1,\ldots,\bmu_k$.
For $\bQ$ and $\bLambda_1,\ldots,\bLambda_k$ we adopt the the simple intuitive procedure proposed by \citet{Krza:Prin:1984}, which is based on the PCA of the pooled sample covariance matrix and the total sample covariance matrix, followed by comparison of their eigenvectors.
Finally, to initialize $\beta_j$, $j=1,\ldots,k$, we use the empirical excess kurtosis if it falls in $\left[0,\beta_{\max}\right]$ (cf.~Section~\ref{subsec:The model}); instead, we put $\beta_j=0$, or $\beta_j=\beta_{\max}$, if the empirical excess kurtosis is lower than 0, or greater than $\beta_{\max}$, respectively.
By means of \eqref{eq:Gamma back-transformation}, \eqref{eq:lambda back-transformation} and \eqref{eq:beta back-transformation}, the obtained initial values are transformed so to be passed to \texttt{optim()}.
From the transformation \eqref{eq:Gamma transformation} related to the initial orthogonal matrix $\bQ$, we also obtain the permutation matrix $\bP$ that will be used by \texttt{optim()}.
Note that, fixing the permutation matrix to the initial one does not reduce the space of orthogonal matrices considered by \texttt{optim()}, but simply affects the order of the eigenvalues on the diagonal of $\bLambda_j$, $j=1,\ldots,k$.
This means that the ML estimated eigenvalues may not be simultaneously ordered in decreasing order in all groups.
However, having eigenvalues in an arbitrary order is not an issue in CPCA \citep{Trend:Step:2010}.

\section{Application to allometric studies}
\label{sec:Application to allometric studies}

Allometric studies are a natural area of application of the leptokurtic-normal CPCA proposed in Section~\ref{sec:Application}.
Allometry can be roughly devised as a tool to study the relation between parts (morphometric variables) in various organisms \citep{Huxl:Prob:1932}.
According to \citet{Joli:Them:1963}, allometry can be summarized by the first principal component (PC) of the log-transformed measurements.
For the practical and theoretical reasons why it is often useful to transform data to logarithms see, e.g., \citet{Pime:Morp:1979}, \citet{Reym:Mult:1991}, and \citet{Book:Morp:1997}.

When the study involves several groups of specimens, e.g.~different sexes or species, the interest is comparing the group-specific allometric patterns.
This aim can be handled by comparing the group-specific PCs \citep[see, e.g.,][]{Klin:Mult:1996}.
Comparisons of allometry within several groups often show that the PCs differ only minimally. 
In these cases, it may be feasible to use CPCA, where the groups are assumed to share a common allometric pattern, i.e., that the major axes of their scatters are parallel \citep{Klin:Neue:Flur:Onto:1996}.
The amount of variation associated with each PC can, instead, vary between groups.
However, classical CPCA implies groups having a multivariate normal distribution, and this could be rather restrictive in some cases (cf.~Section~\ref{subsec:The model}).

Motivated by the above considerations, and using classical real biometric data, we compare: the CPC model based on normal groups (N-CPC), the CPC model based on leptokurtic-normal groups (LN-CPC), the model with unconstrained covariance matrices and normal groups (N-PC), and the model with unconstrained covariance matrices and leptokurtic-normal groups (LN-PC).
The whole analysis is conducted in \textsf{R}.
Parameters of the competing models are estimated by the ML approach.
For uniformity sake, the PLR decomposition is adopted for both the CPC approaches.
For the N-CPC model, the transformation/back-transformation approach based on the PLR decomposition provided the same estimates of $\bQ$ of the FG-algorithm in all our analyses (also those not reported in this paper).
Having the competing models a differing number of parameters, their comparison is accomplished, as usual, via the Akaike information criterion (AIC; \citealp{Akai:Anew:1974}) and the Bayesian information criterion (BIC; \citealp{Schw:Esti:1978}) that, in our formulation, need to be maximized.
Moreover, likelihood-ratio (LR) tests are used to compare nested models, and this gives rise to new testing procedures.
Just as an example, the LR test having the N-CPC model under the null and the LN-PC model as alternative represents a more \textit{omnibus} version of the LR test proposed by \citet{Flur:comm:1984} where a more restrictive N-PC model is considered under the alternative. 
Using Wilks' theorem, the LR statistic, under the null, is distributed approximately as a $\chi^2$ with degrees of freedom given by the difference in the number of estimated parameters between the alternative and the null model; this allows us to compute a $p$-value that, in the sequel, will be always compared to the classical 5\% significance level.

\subsection{Skull dimensions of voles}
\label{subsec:Skull dimensions of voles}

The first analysis considers the \texttt{microtus} data set accompanying the \textbf{Flury} package \citep{R:Flury} for \textsf{R}.
The data set contains morphological measurements, for eight variables, on the skulls of 288 specimens of voles found at various places in central Europe.
For 89 of the skulls, the chromosomes were analyzed to identify their membership to one of $k=2$ species; $n_1 = 43$ specimens were from \textit{microtus multiplex}, and $n_2 = 46$ from \textit{microtus subterraneus}.
Species was not determined for the remaining 199 specimens.
\citet{Airo:Flur:Salv:Disc:1995} report a discriminant analysis and finite mixture analysis of this data set; see also \citet[][Examples~5.4.4 and 9.5.1]{Flur:Afir:2013}.

We analyze the sample of $n=89$ labeled skulls -- because we need to know the group of membership of the observations for the application of the competing models -- and focus the attention on the logarithm of $d=2$ of the available measurements: the length of palatal bone (in mm/1000) and the skull width across rostrum (in mm/100). 
The scatter plot of the observations, with symbols diversified with respect to the species, is displayed in \figurename~\ref{fig:microtus}.
\begin{figure}[!ht]
	\centering		
\resizebox{0.5\textwidth}{!}{\includegraphics{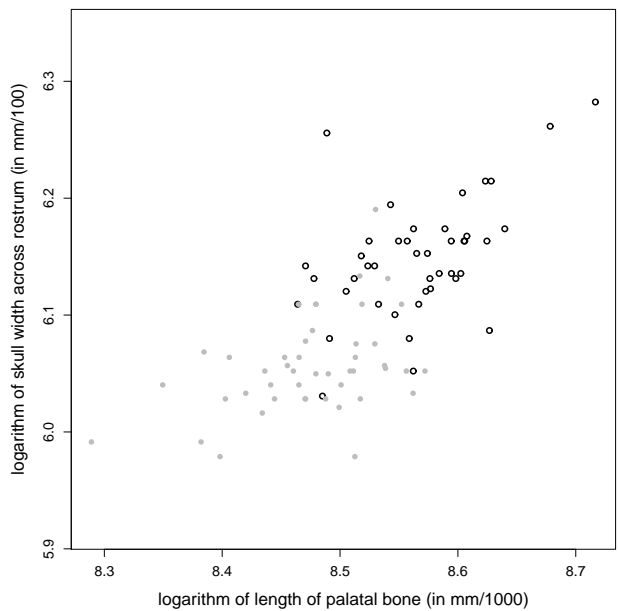}}
	\caption{
	Scatter plot of the \texttt{microtus} data with variables in logarithmic scale
	(\textcolor{gray}{$\bullet$} denotes \textsl{microtus subterraneus}, while $\boldsymbol{\circ}$ denotes \textit{microtus multiplex}).
	}
	\label{fig:microtus}
\end{figure}
For \textit{microtus multiplex} voles, the empirical excess kurtosis is 2.341, and the Mardia test of mesokurtosis, as implemented by the \texttt{mvn()} function of the \textbf{MVN} package \citep{R:MVN}, provides a $p$-value of 0.019; this leads to a rejection of the null hypothesis of mesokurtosis.
This also implies a rejection of bivariate normality.
Instead, for \textit{microtus subterraneus} voles, the empirical bivariate excess kurtosis is 1.370, corresponding to a $p$-value of 0.170 for the Mardia test. 
So, the \textit{microtus multiplex} voles motivate the need of a distribution accounting for heavier than normal tails. 
Moreover, \figurename~\ref{fig:microtus} seems to suggest that the two scatters for \textsl{microtus subterraneus} and \textit{microtus multiplex}
have approximately the same orientation (i.e., the same principal components).
 
To evaluate if a leptokurtic-normal distribution fits the data in each group better, and to assess our conjecture about the similarity between orientations, we proceed with the fitting of the competing models.
We could have used the LN distribution only for the \textit{microtus multiplex} voles, and the fitting procedure outlined in Section~\ref{subsec:Computational details} would have been easily adapted to the case.
However, this would have gone beyond the scope of this real data application, which is to show the versatility of our PLR decomposition, jointly with any unconstrained optimization routine, in the estimation of an orthogonal matrix, regardless of the model structure.
\tablename~\ref{tab:microtus estimates} reports the ML-estimated parameters. 
\begin{table}[!ht]
\caption{ML estimated parameters of the competing models for the \texttt{microtus} data. Eigenvalues are multiplied by 100.}
\label{tab:microtus estimates}
\centering
\resizebox{\textwidth}{!}{
\begin{tabular}{cccc}
\toprule
N-CPC & LN-CPC & N-PC & LN-PC \\
\midrule
$\bmu_1=\begin{bmatrix}
8.566 \\ 
6.149 \\
\end{bmatrix}$
& 
$\bmu_1=\begin{bmatrix}
8.563 \\ 
6.148 \\
\end{bmatrix}$
& 
$\bmu_1=\begin{bmatrix}
8.566 \\ 
6.149 \\
\end{bmatrix}$
& 
$\bmu_1=\begin{bmatrix}
8.563 \\ 
6.148 \\
\end{bmatrix}$ \\
$\bmu_2=\begin{bmatrix}
8.476 \\ 
6.056 \\
\end{bmatrix}$
& 
$\bmu_2=\begin{bmatrix}
8.478 \\ 
6.056 \\
\end{bmatrix}$
& 
$\bmu_2=\begin{bmatrix}
8.476 \\ 
6.056 \\
\end{bmatrix}$
& 
$\bmu_2=\begin{bmatrix}
8.478 \\ 
6.056 \\
\end{bmatrix}$ \\
\multirow{2}{*}{
$\bQ=\begin{bmatrix}
-0.850 & -0.527 \\
-0.527 &  0.850 \\
\end{bmatrix}$}
&
\multirow{2}{*}{
$\bQ=\begin{bmatrix}
-0.886 & -0.463 \\
-0.463 &  0.886 \\
\end{bmatrix}$}
&
$\bQ_1=\begin{bmatrix}
-0.766 & -0.643 \\
-0.643 &  0.766 \\
\end{bmatrix}$
&
$\bQ_1=\begin{bmatrix}
-0.851 & -0.525 \\
-0.525 &  0.851 \\
\end{bmatrix}$
\\
& 
&
$\bQ_2=\begin{bmatrix}
-0.908 & -0.420 \\
-0.420 &  0.908 \\
\end{bmatrix}$
&
$\bQ_2=\begin{bmatrix}
-0.918 & -0.395 \\
-0.395 &  0.918 \\
\end{bmatrix}$ 
\\
$\bLambda_1=\begin{bmatrix}
  0.408 & 0 \\ 
 0 & 0.140\\
\end{bmatrix}$
& 
$\bLambda_1=\begin{bmatrix}
  0.437 & 0 \\ 
 0 & 0.131\\
\end{bmatrix}$
& 
$\bLambda_1=\begin{bmatrix}
 0.413 & 0 \\ 
 0 & 0.134\\
\end{bmatrix}$
& 
$\bLambda_1=\begin{bmatrix}
  0.432 & 0 \\ 
 0 & 0.130\\
\end{bmatrix}$ \\
$\bLambda_2=\begin{bmatrix}
  0.378 & 0 \\ 
 0 & 0.123\\
\end{bmatrix}$
& 
$\bLambda_2=\begin{bmatrix}
 0.376  & 0 \\ 
 0 & 0.121 \\
\end{bmatrix}$
& 
$\bLambda_2=\begin{bmatrix}
 0.382  & 0 \\ 
 0 & 0.119\\
\end{bmatrix}$
& 
$\bLambda_2=\begin{bmatrix}
  0.378 & 0 \\ 
 0 & 0.122\\
\end{bmatrix}$ \\
&
\multirow{2}{*}{$\beta_1=  4.753    $}
&
&
\multirow{2}{*}{$\beta_1=   4.260   $}\\ 
&
\multirow{2}{*}{$\beta_2=  1.778   $}
&
&
\multirow{2}{*}{$\beta_2=  1.870   $}
\\[2mm]
\bottomrule
\end{tabular}
}
\end{table}
As we can note, the models are very similar in terms of estimated mean vectors.
Instead, they differ in terms of estimated eigenvectors and eigenvalues matrices (compare N-CPC with LN-CPC and N-PC with LN-PC).
This shows how the leptokurtik-normal distributional assumption robustifies the estimation of these matrices with respect to normally distributed groups.
The robustifying effect can be particularly noted comparing N-PC with LN-PC models; here, the estimates of the orthogonal matrices differ mainly in group 1 (\textit{microtus multiplex}), where a larger empirical/estimated excess kurtosis is present.
Finally, as concerns the models based on the leptokurtic-normal distribution, the estimates of the excess kurtoses $\beta_1$ and $\beta_2$ are in line with the empirical excess kurtoses (2.341 and 1.370).

\tablename~\ref{tab:microtus model comparison} presents a model comparison in terms of: number of parameters ($m$), log-likelihood, AIC, and BIC values (\tablename~\ref{tab:microtus loglik, AIC and BIC}) and with respect to the $p$-values from the LR tests (\tablename~\ref{tab:microtus LR tests}).
\begin{table}[!ht]
\caption{Model comparison, for the \texttt{microtus} data, in terms of number of parameters ($m$), log-likelihood, AIC, and BIC (on the left) and $p$-values from the LR tests (on the right).}
\label{tab:microtus model comparison}
\centering
\subtable[Number of parameters, log-likelihood, AIC, and BIC]
{
\label{tab:microtus loglik, AIC and BIC}
\resizebox{!}{0.055\textheight}{
\begin{tabular}{l rrrr}
\toprule
& $m$  & Log-likelihood & AIC & BIC \\
\midrule
N-CPC     & 9  &  289.319 & 560.639 & 538.241 \\
LN-CPC    & 11 &  293.946 & 565.891 & 538.516 \\
N-PC      & 10 &  290.416 & 560.831 & 535.945 \\
LN-PC     & 12 &  294.220 & 564.439 & 534.575 \\
\bottomrule
\end{tabular}
}
}
\quad
\subtable[$p$-values from the LR tests]
{
\label{tab:microtus LR tests}
\resizebox{!}{0.055\textheight}{
\begin{tabular}{l|rrr}
\toprule
\backslashbox{$H_0$}{$H_1$}  &   LN-CPC  &   N-PC   &   LN-PC \\
						\midrule
N-CPC  &      0.010 & 0.139  & 0.020 \\
LN-CPC &        --  &     -- & 0.459 \\
N-PC   &        --  &     -- & 0.022 \\
\bottomrule
\end{tabular}
}
}
\end{table}
AIC and BIC in \tablename~\ref{tab:microtus loglik, AIC and BIC} indicate LN-CPC as the best model.
This means that, with respect to a model with unconstrained covariance matrices, a more parsimonious model allowing for common principal components is to be preferred, but without giving up to groups having a heavier-tailed distribution (the leptokurtic-normal in this case). 
By looking at \tablename~\ref{tab:microtus LR tests}, the null N-CPC model of the LR test by \citet{Flur:comm:1984} is not rejected ($p$-value=0.139).
This happens because of the alternative N-PC model used by the test.
If we make the test more \textit{omnibus} by considering the LN-PC model as alternative, then the conclusion changes ($p$-value=0.020); interestingly, if we change the null hypothesis of this test using a less constrained LN-CPC model, then the conclusion is in favor of the null ($p$-value=0.459).
The null N-CPC model is also rejected, with a greater extent, if we use LN-CPC as alternative ($p$-value=0.010).
Finally, it is also interesting to note that, even if we do not consider a CPC approach, the need for leptokurtic-normal groups arises: the $p$-value of the test of N-PC versus LN-PC is, indeed, 0.022.


\subsection{Head dimensions of young Swiss soldiers}
\label{subsec:Head dimensions of young Swiss soldiers}

The second analysis considers the \texttt{swiss.soldiers} data set considered by \citet[][Example~2.3]{Flur:comm:1988}.
The data were collected by a group of anthropologists on approximately 900 Swiss soldiers, most of them recruits,
subdivided in $k=2$ groups according to the gender.
All the soldiers were 20 years old at the time of investigation and 25 variables were measured on their heads.
The purpose of the study, promoted by the the Swiss government in the mid-1980s, was to provide sufficient data to construct new protection masks for the members of the Swiss army.

We start by the subset of the \texttt{swiss.soldiers} data which can be obtained by merging the \texttt{swiss.head} (referred to men) and \texttt{f.swiss.head} (referred to women) data sets included in the \textbf{Flury} package.
The merged data contain the values of 6 head measurements for a subsample of $n=259$ soldiers, with $n_1 = 59$ women and $n_2=200$ men.
A detailed analysis for the men has been given by \citet[][Example~10.2]{Flur:Mult:2011} and \citet[][Example~1.2]{Flur:Afir:2013}.
In particular, we focus on the logarithm of $d=3$ of the available head measurements: the minimal frontal breadth (MFB), the true facial height (TFH), and the length from tragion to gnathion (LTG).
All measurements are in millimeters.
The matrix of pairwise scatter plots, with symbols diversified with respect to the gender, is displayed in \figurename~\ref{fig:swiss.soldiers}.
\begin{figure}[!ht]
	\centering		
\resizebox{0.65\textwidth}{!}{\includegraphics{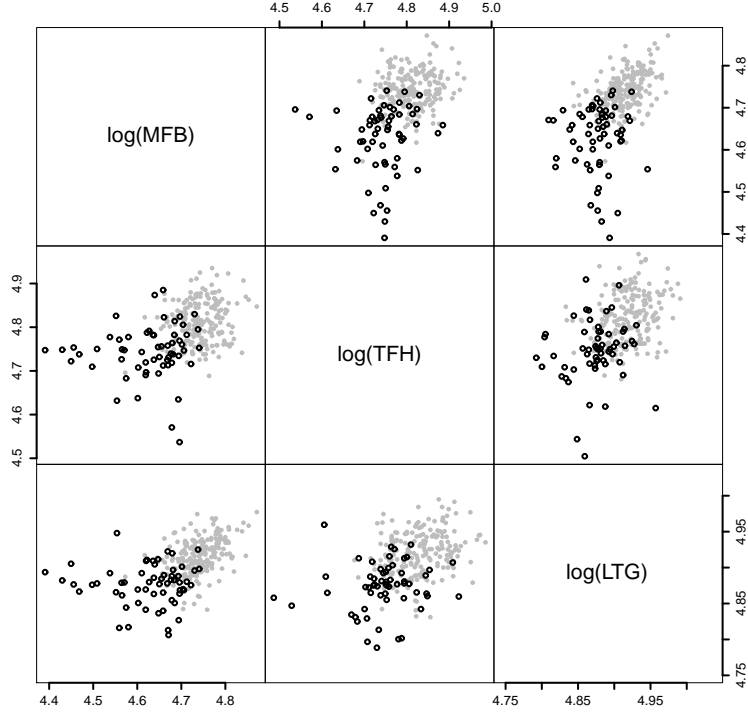}}
	\caption{
	Matrix of scatter plots of the \texttt{swiss.soldiers} data with variables in logarithmic scale ($\boldsymbol{\circ}$ denotes women and \textcolor{gray}{$\bullet$} denotes men).
	}
	\label{fig:swiss.soldiers}
\end{figure}
For women, the empirical excess kurtosis is 1.130, and the Mardia test of mesokurtosis provides a $p$-value of 0.259.
As concerns men, the empirical excess kurtosis is 7.621 and the Mardia test of mesokurtosis provides a practically null $p$-value indicating that a leptokurtic distribution should be better than the normal for that group.
So, in this example, the group of men justifies the use of the leptokurtic-normal distribution.

\tablename~\ref{tab:swiss.soldiers estimates} shows the ML estimated parameters for the competing models.
\begin{table}[!ht]
\caption{ML estimated parameters of the competing models for the \texttt{swiss.soldiers} data. Eigenvalues are multiplied by 100.}
\label{tab:swiss.soldiers estimates}
\centering
\resizebox{\textwidth}{!}{
\begin{tabular}{cccc}
\toprule
N-CPC & LN-CPC & N-PC & LN-PC \\
\midrule
$\bmu_1=\begin{bmatrix}
4.721 \\ 
4.810 \\ 
4.931 \\ 
\end{bmatrix}$
& 
$\bmu_1=\begin{bmatrix}
4.718 \\
4.807 \\ 
4.929 \\ 
\end{bmatrix}$
& 
$\bmu_1=\begin{bmatrix}
4.721 \\ 
4.810 \\ 
4.931 \\ 
\end{bmatrix}$
& 
$\bmu_1=\begin{bmatrix}
4.719 \\ 
4.808 \\ 
4.931 \\ 
\end{bmatrix}$ \\
$\bmu_2=\begin{bmatrix}
4.714 \\ 
4.792 \\
4.913 \\
\end{bmatrix}$
& 
$\bmu_2=\begin{bmatrix}
4.722 \\ 
4.796 \\
4.917 \\
\end{bmatrix}$
& 
$\bmu_2=\begin{bmatrix}
4.714 \\ 
4.792 \\
4.913 \\
\end{bmatrix}$
& 
$\bmu_2=\begin{bmatrix}
4.721 \\ 
4.795 \\
4.916 \\
\end{bmatrix}$ \\
\multirow{2}{*}{
$\bQ=\begin{bmatrix}
-0.783 &   0.546 &  -0.298 \\
-0.464 &  -0.832 &  -0.304 \\
-0.414 &  -0.100 &   0.905 \\
\end{bmatrix}$}
&
\multirow{2}{*}{
$\bQ=\begin{bmatrix}
-0.750 &   0.559 &  -0.353 \\
-0.501 &  -0.829 &  -0.247 \\
-0.431 &  -0.009 &   0.902 \\
\end{bmatrix}$}
&
$\bQ_1=\begin{bmatrix}
-0.647 &    0.230 &   -0.727 \\
-0.574 &   -0.775 &    0.266 \\
-0.502 &    0.589 &    0.633 \\
\end{bmatrix}$
&
$\bQ_1=\begin{bmatrix}
 -0.634 &  0.290 &  -0.717 \\
 -0.566 &  -0.806 &  0.174 \\
 -0.527 &  0.516 &  0.675 \\
\end{bmatrix}$
\\
& 
&
$\bQ_2=\begin{bmatrix}
-0.796 &  0.531 & -0.291 \\
-0.448 & -0.840 & -0.305 \\
-0.407 & -0.113 &  0.907 \\
\end{bmatrix}$
&
$\bQ_2=\begin{bmatrix}
-0.765 &  0.554 &  -0.327  \\
-0.486 &  -0.831 &  -0.271  \\
-0.422 &  -0.049 &  0.905  \\
\end{bmatrix}$ 
\\
$\bLambda_1=\begin{bmatrix}
 0.282  &    0   &  0  \\ 
 0 &   0.135     &  0  \\
 0 &    0   &   0.130  \\
\end{bmatrix}$
& 
$\bLambda_1=\begin{bmatrix}
 0.303  &    0   &  0  \\ 
 0 &  0.149      &  0  \\
 0 &    0   &  0.128   \\
\end{bmatrix}$
& 
$\bLambda_1=\begin{bmatrix}
 0.289  &    0   &  0  \\ 
 0 &  0.153      &  0  \\
 0 &    0   & 0.105    \\
\end{bmatrix}$
& 
$\bLambda_1=\begin{bmatrix}
 0.304  &    0   &  0  \\ 
 0 &  0.161      &  0  \\
 0 &    0   &   0.110  \\
\end{bmatrix}$ \\
$\bLambda_2=\begin{bmatrix}
  0.867 &    0   &  0  \\ 
 0 &   0.269     &  0  \\
 0 &    0   &  0.153   \\
\end{bmatrix}$
& 
$\bLambda_2=\begin{bmatrix}
  0.925 &    0   &  0  \\ 
 0 &  0.303      &  0  \\
 0 &    0   &  0.166   \\
\end{bmatrix}$
& 
$\bLambda_2=\begin{bmatrix}
  0.868 &    0   &  0  \\ 
 0 &   0.269     &  0  \\
 0 &    0   & 0.153    \\
\end{bmatrix}$
& 
$\bLambda_2=\begin{bmatrix}
 0.916  &    0   &  0  \\ 
 0 &  0.295      &  0  \\
 0 &    0   &   0.165  \\
\end{bmatrix}$ \\
&
\multirow{2}{*}{$\beta_1=  1.813   $}
&
&
\multirow{2}{*}{$\beta_1=  1.778   $}\\ 
&
\multirow{2}{*}{$\beta_2=  5.601   $}
&
&
\multirow{2}{*}{$\beta_2=  5.331   $}
\\[2mm]
\bottomrule
\end{tabular}
}
\end{table} 
As for the example on the \texttt{microtus} data of Section~\ref{subsec:Skull dimensions of voles}, the models behave in a similar way in terms of estimated mean vectors, but differ in terms of estimated eigenvectors and eigenvalues matrices if N-CPC is compared with LN-CPC and N-PC with LN-PC.
Such a difference is due to the leptokurtic-normal distributional assumption.
For the LN-based models, the estimates of the excess kurtoses $\beta_1$ and $\beta_2$ are in line enough with the empirical excess kurtoses (1.130 and 7.621).

\tablename~\ref{tab:swiss.soldiers model comparison} presents the model comparison.
\begin{table}[!ht]
\caption{Model comparison, for the \texttt{swiss.soldiers} data, in terms of number of parameters ($m$), log-likelihood, AIC, and BIC (on the left) and $p$-values from the LR tests (on the right).}
\label{tab:swiss.soldiers model comparison}
\centering
\subtable[Number of parameters, log-likelihood, AIC, and BIC]
{
\label{tab:swiss.soldiers loglik, AIC and BIC}
\resizebox{!}{0.055\textheight}{
\begin{tabular}{l rrrr}
\toprule
           & $m$ par. & Log-likelihood & AIC & BIC \\
\midrule
N-CPC     & 15 &  1176.166 & 2322.333 & 2268.980 \\
LN-CPC    & 17 &  1190.986 & 2347.973 & 2287.506 \\
N-PC      & 19 &  1178.091 & 2320.182 & 2256.159 \\
LN-PC     & 20 &  1192.435 & 2344.869 & 2273.733 \\
\bottomrule
\end{tabular}
}
}
\quad
\subtable[$p$-values from the LR tests]
{
\label{tab:swiss.soldiers LR tests}
\resizebox{!}{0.055\textheight}{
\begin{tabular}{l|rrr}
\toprule
\backslashbox{$H_0$}{$H_1$} &   LN-CPC  &   N-PC   &   LN-PC \\
						\midrule
N-CPC  &      0.000 & 0.278  & 0.000 \\
LN-CPC &        --  &     -- & 0.408 \\
N-PC   &        --  &     -- & 0.000 \\
\bottomrule
\end{tabular}
}
}
\end{table}
AIC and BIC in \tablename~\ref{tab:swiss.soldiers loglik, AIC and BIC} select LN-CPC as the best model, with a stronger extent with respect to the previous example.
Such a greater extent has analogous implications for the $p$-values from the LR tests (see \tablename~\ref{tab:swiss.soldiers LR tests}).
The null N-CPC model is not rejected if N-PC is considered as alternative model ($p$-value=0.278), but it is strongly rejected (with an approximately null $p$-value) if a LN-based model (LN-CPC or LN-PC) is considered under the alternative.
If LN-CPC is considered as the null model, and LN-PC as alternative, then the conclusion is in favor of the null ($p$-value=0.408).
Even if we do not consider a CPC approach, the need for leptokurtic-normal groups is even more evident in this example: the $p$-value of the test of N-PC versus LN-PC is, indeed, practically null.

\section{Conclusions}
\label{sec:Conclusions}

Estimating and modelling a $\left(d \times d\right)$ covariance matrix $\bSigma$ is often difficult because of the constraint that $\bSigma$ must be positive definite.
The Cholesky decomposition provides a remedy to this problem by mapping the $d\left(d+1\right)/2$ constrained parameters of $\bSigma$ with the $d\left(d+1\right)/2$ unconstrained real-valued elements of a $\left(d \times d\right)$ lower triangular matrix (\citealp{pourahmadi99,pourahmadi00} and \citealp{pourahmadi07}).
Analogously, estimating and modelling a $\left(d \times d\right)$ orthogonal matrix $\bQ$ is often cumbersome because of its orthonormality constraints.
Unfortunately, in this case, an analogous of the Cholesky decomposition does not exist.
In this paper we have filled this gap by providing a decomposition for orthogonal matrices, that we have called PLR decomposition, mapping $\bQ$ to a $\left(d \times d\right)$ unit lower triangular matrix with $d\left(d-1\right)/2$ unconstrained real entries below the diagonal.  

For illustrative purposes, we have applied our PLR decomposition in common principal component analysis (CPCA), based on groups having a heavier-tails leptokurtic-normal distribution, for maximum likelihood estimation of the common orthogonal matrix.
We have chosen allometry as a natural area of application of the resulting leptokurtic-normal CPCA and the real data analyses have effectively confirmed its good behavior when compared to the classical normal CPCA.

However, the use of the PLR decomposition of an orthogonal matrix is not limited to CPCA, and other statistical models may benefit from its use.
Indeed, the PLR decomposition may be used to simplify the ML estimation of the orthogonal matrix related, only to mention a few, to: 
CPCA based on further non-normal distributions for the groups,  
other multiple group models allowing for common covariance structures (\citealp{Flur:Prop:1986} and \citealp{Gres:Punz:Clos:2013}),    
parsimonious model-based clustering, classification and discriminant analysis (\citealp{Banf:Raft:mode:1993}, \citealp{Flur:Schm:Nara:Erro:1994}, \citealp{Cele:Gova:Gaus:1995}, \citealp{Fral:Raft:Mode:2002}, \citealp{Andr:McNi:Mode:2012}, \citealp{Bagn:Gres:Punz:Onth:2014}, \citealp{Lin:Lear:2014}, \citealp{Vrbi:McNi:2014}, \citealp{Dang:Brow:McNi:Mixt:2015}, \citealp{Punz:Brow:McNi:Hypo:2016}, \citealp{Punz:McNi:Robu:2016}, \citealp{R:ContaminatedMixt:JSS}, \citealp{Dott:Farc:Robu:2019}), extensions of hidden Markov models (\citealp{Punz:Maru:Clus:2016} and \citealp{Maru:Punz:Mode:2017}),  
and sophisticated multivariate distributions (\citealp{Forb:Wrai:Anew:2014} and \citealp{Punz:Tort:Mult:2018}).
We pursue to handle these possibilities in future works.


\clearpage

\appendix


\section{}
\subsection{PLR decomposition in \textsf{R}}
\label{app:PLR decomposition in R}


\lstinputlisting{PLRdecomposition.R}

\label{lastpage}

\end{document}